\documentclass[12pt,a4paper]{article}
\usepackage{bbm}
\usepackage{stmaryrd}
\usepackage{amsmath}
\usepackage{amssymb}
\usepackage{hyperref}
\usepackage[mathcal]{euscript}

\topmargin  =35.mm      
\oddsidemargin  =0.mm       
\evensidemargin =0.mm       
\headheight = -10.mm  \headsep    =-20.mm \textheight =250.mm
\textwidth=175.mm

\newcounter{thMM}[section]
\setcounter{thMM}{0}
\newcounter{leMM}[section]
\setcounter{leMM}{0}
\newcounter{deFF}[section]
\setcounter{deFF}{0}
\newcounter{exMP}[section]
\setcounter{exMP}{0}
\newcounter{prOP}[section]
\setcounter{prOP}{0}
\newcounter{coRR}[section]
\setcounter{coRR}{0}
\newcounter{coexMP}[section]
\setcounter{coexMP}{0}
\newcounter{remAR}[section]
\setcounter{remAR}{0}

\newenvironment{theorem}[1][Theorem]{\refstepcounter{thMM}\trivlist
   \item[{\bf #1~\thesection.\arabic{thMM}.}]\it\hskip3pt}{\endtrivlist}
\newenvironment{lemma}[1][Lemma]{\refstepcounter{leMM}\trivlist
   \item[{\bf #1~\thesection.\arabic{leMM}.}]\it\hskip3pt}{\endtrivlist}
\newenvironment{definition}[1][Definition]{\refstepcounter{deFF}\trivlist
   \item[{\bf #1~\thesection.\arabic{deFF}.}]\rm\hskip3pt}{\endtrivlist}

\newenvironment{proposition}[1][Proposition]{\refstepcounter{prOP}\trivlist
   \item[{\bf #1~\thesection.\arabic{prOP}.}]\it\hskip3pt}{\endtrivlist}
\newenvironment{corollary}[1][Corollary]{\refstepcounter{coRR}\trivlist
   \item[{\bf #1~\thesection.\arabic{coRR}.}]\it\hskip3pt}{\endtrivlist}

\newenvironment{remark}[1][Remark]{\refstepcounter{remAR}\trivlist
\item[{\bf #1~\thesection.\arabic{remAR}.}]\rm\hskip3pt}{\endtrivlist}

\newenvironment{proof}[1][Proof]{\begin{trivlist}
\item[\hskip \labelsep {\bfseries #1}]}{\end{trivlist}}

\newcommand{\SN}[1]{\llbracket   #1 \rrbracket}
\newcommand{\proofend}{\flushright $\square$}
\newcommand{\Q}{\mathcal{Q}}
\newcommand{\Id}{\mathbbmss{1}}

\newcommand{\D}{\mathbbmss{D}}

\DeclareMathOperator{\Vect}{Vect}

\DeclareMathOperator{\w}{w}
\DeclareMathOperator{\proj}{proj}

\numberwithin{equation}{section}

\begin{document}
\author{Andrew James Bruce  \\ 
\newline \small{\emph{email:} \texttt{andrewjamesbruce@googlemail.com}  } }
\date{\today}
\title{Jacobi algebroids and quasi Q-manifolds}
\maketitle

\begin{abstract}
We reformulate the notion of a Jacobi algebroid in terms of weighted odd Jacobi brackets. We then show how a Jacobi algebroid can be understood in terms of  a  kind of \emph{curved}  Q-manifold. In particular  the homological condition on the odd vector field is \emph{deformed} in a very specific way. This leads to the notion of a quasi Q-manifold.
\end{abstract}
\begin{small}
\textbf{MSC 2010}: 17B70; 53D10; 53D17; 58A50.\\
\textbf{Keywords}: Supermanifolds,  Jacobi manifolds, Lie algebroids, contact structures, Q-manifolds.
\end{small}

\section{Introduction}\label{sec:Introduction}

Jacobi algebroids were first introduced by Iglesias and  Marrero \cite{Iglesias2001} under the name of generalised Lie algebroids. Such structures were the \emph{revisited} by Grabowski and Marmo \cite{Grabowski2001}. Jacobi algebroids represent a nice generalistion of the concept of a Lie algebroid. Indeed, Jacobi algebroids can be understood as Lie algebroids in the presence of a 1-cocycle \cite{Iglesias2001}. Recall the notion of a Lie algebroid as a vector bundle $E \rightarrow M$ equipped with a Lie bracket on the sections $[\bullet, \bullet]: \Gamma(E) \otimes \Gamma(E) \rightarrow \Gamma(E)$ together with an anchor $a: \Gamma(E) \rightarrow \Gamma(TM)$ that satisfy the Leibniz rule

\begin{equation}
\nonumber [u,fv] = a(u)[f] \: v  +  (-1)^{\widetilde{u}\widetilde{f}} f [u,v],
\end{equation}

where $u,v \in \Gamma(E)$ and $f \in C^{\infty}(M)$. The Leibniz rule implies that the anchor is actually a Lie algebra morphism: $a\left([u,v]\right) = [a(u), a(v)]$.  A Lie algebroid  can also  be understood in terms of:

\begin{enumerate}
\item a homological vector field of weight one on the total space of $\Pi E$. \label{homvect}
\item a weight minus one Poisson structure on the total space of $E^{*}$.
\item a weight minus one Schouten structure on the total space of $\Pi E^{*}$. \label{schouten}
\end{enumerate}

Here $\Pi$ is the parity reversion functor which shifts the Grassmann parity of the fibre coordinates, but does not effect the base coordinates. Also the parity reversion functor does not effect the assignment of the weights.\\

At this point we must remark that we will be working the the category of graded manifolds. That is  we will be working with supermanifolds equipped with a privileged class of atlases where the coordinates are assigned weights taking values in $\mathbbmss{Z}$ and the coordinate transformations are polynomial in coordinates with nonzero weights respecting the  weight. Generally the weight  will be independent of the Grassmann parity. Moreover, any sign factors that arise will be due to the Grassmann parity and we do not include any possible extra signs due to the weight. In simpler terms, we have a manifold equipped with a  distinguished class of charts and diffeomorphisms between them respecting the $\mathbbmss{Z}_{2}$-grading as well as the additional $\mathbbmss{Z}$-grading. These gradings then pass over to geometric objects (tensor and tensor-like objects) on graded manifolds. For further details about graded manifolds one can consult  \cite{Grabowski2009,Roytenberg:2001,Voronov:2001qf}.\\

Jacobi algebroids c.f. \cite{Grabowski2001,Iglesias2001} are understood in terms of a weight minus one even Jacobi bracket on $C^{\infty}(E^{*})$ for a given vector bundle $E \rightarrow M$. By  an even Jacobi bracket we mean a Poisson-like bracket in which the Leibnitz rule is weakened in  very specific way. We say more about this shortly.\\

In this paper we address the ``odd" approach to Jacobi algebroids generalising \ref{homvect}. and \ref{schouten}. of the above list mimicking the odd-super constructions related to Lie algebroids. The description of Lie algebroids in terms of homological vector fields is due to Va$\breve{\textrm{{\i}}}$ntrob \cite{Vaintrob:1997}. The deep links between Poisson geometry and Lie algebroids can be traced back to Coste, Dazord \& Weinstein \cite{Coste1987}.  The approach employed here is inherently \emph{super} as we work in the category of supermanifolds. We will assume basic knowledge of Lie algebroids and elementary  knowledge of supermanifolds.    \\

We continue this section with a brief statement of preliminaries needed for the rest of this paper. In \S\ref{sec:main constructions} we present our main constructions. In \S\ref{sec:from jacobi algebroids} we restate the association of Jacobi algebroids and lie algebroids in the presence of a 1-cocycle in language appropriate for the constructions of the previous section. In \S\ref{sec:shoutenisation} we show how to ``Schoutenise"  odd Jacobi structures  by extending the manifolds and thus build a \emph{larger} Lie algebroid directly from a Jacobi algebroid. We present a canonical example of a Jacobi algebroid in the form an odd contact manifold in \S\ref{sec:odd contact}. We round up with a few concluding remarks in \S\ref{sec:concluding}.  An appendix on the double vector bundle morphisms used in this paper is included.  \\

\begin{remark}
Antunes and Laurent--Gengoux \cite{Antunes2011} studied Jacobi structures using the supergeometric formulation of \emph{Buttin's big bracket}. They use this formulism to efficiently describe Jacobi algebroids and Jacobi bialgebroids. The supergeometric approach of the big bracket is different, but certainly related to that presented here.
\end{remark}

\noindent \textbf{Preliminaries} \\
All vector spaces and algebras will be $\mathbb{Z}_{2}$-graded.   We will generally  omit the prefix \emph{super}. By \emph{manifold} we will mean a \emph{smooth supermanifold}. We denote the Grassmann parity of an object by \emph{tilde}: $\widetilde{A} \in \mathbb{Z}_{2}$. By \emph{even} or \emph{odd} we will be referring explicitly to the Grassmann parity. \\

 A \emph{Poisson} $(\varepsilon = 0)$  or \emph{Schouten} $(\varepsilon = 1)$ \emph{algebra} is understood as a vector space $\mathbb{A}$ with a bilinear associative multiplication and a bilinear operation (``a bracket") $\{\bullet , \bullet\}_{\varepsilon}: \mathbb{A}  \otimes \mathbb{A} \rightarrow \mathbb{A}$ such that:
\begin{list}{}
\item \textbf{Grading} $\widetilde{\{a,b \}_{\varepsilon}} = \widetilde{a} + \widetilde{b} + \varepsilon$
\item \textbf{Skewsymmetry} $\{a,b\}_{\varepsilon} = -(-1)^{(\tilde{a}+ \varepsilon)(\tilde{b}+ \varepsilon)} \{b,a \}_{\varepsilon}$
\item \textbf{Jacobi Identity} $\displaystyle\sum\limits_{\textnormal{cyclic}(a,b,c)} (-1)^{(\tilde{a}+ \varepsilon)(\tilde{c}+ \varepsilon)}\{a,\{b,c\}_{\varepsilon}  \}_{\varepsilon}= 0$
\item \textbf{Leibniz Rule} $\{a,bc \}_{\varepsilon} = \{a,b \}_{\varepsilon}c + (-1)^{(\tilde{a} + \varepsilon)\tilde{b}} b \{a,c \}_{\varepsilon}$
\end{list} \vspace{10pt}
for all homogenous elements $a,b,c \in \mathbb{A}$.\\

If the Leibniz rule does not hold identically, but has an ``anomaly"  term as
\begin{equation}\nonumber
\{a,bc \}_{\varepsilon} = \{a,b \}_{\varepsilon}c + (-1)^{(\tilde{a} + \varepsilon)\tilde{b}} b \{a,c \}_{\varepsilon} - \{a ,\Id  \} bc,
\end{equation}

then we have  an \emph{even} ($\epsilon = 0)$ or \emph{odd} ($\epsilon = 1)$ \emph{Jacobi algebra}. Note that for Jacobi algebras  $\{a,\bullet\}_{\varepsilon}$ is a first order differential operator and not a vector field as in the Poisson/Schouten case. The theory of even Jacobi brackets on classical manifolds goes back to Lichnerowicz \cite{Lichnerowicz1977}, who also first introduced the notion of  a Poisson manifold. The analogous constructions of odd Jacobi brackets on supermanifolds was recently explored by the author. We will direct the reader to \cite{Bruce2011} for further details as we will recall the basic elements of the theory as needed.\\

A manifold $M$ such that $C^{\infty}(M)$ is a Poisson/Schouten algebra is known as a \emph{Poisson/Schouten manifold}. In particular the cotangent of a manifold comes equipped with a canonical Poisson structure.\\

Let us employ   natural local coordinates $(x^{A}, p_{A})$ on $T^{*}M$, with $\widetilde{x}^{A} = \widetilde{A}$ and $\widetilde{p}_{A} = \widetilde{A}$. Local diffeomorphisms on $M$ induce vector  bundle automorphism on $T^{*}M$ of the form
\begin{equation}\nonumber
\overline{x}^{A} = \overline{x}^{A}(x), \hspace{30pt} \overline{p}_{A}  = \left(\frac{\partial x^{B}}{\partial \overline{x}^{A}}\right)p_{B}.
\end{equation}

We will in effect use the local description as a \emph{natural vector bundle} to define the cotangent bundle of a supermanifold.  The canonical Poisson bracket on the cotangent is given by

\begin{equation}\nonumber
\{ F,G \} = (-1)^{\widetilde{A} \widetilde{F} + \widetilde{A}} \frac{\partial F}{\partial p_{A}}\frac{\partial G}{\partial x^{A}} - (-1)^{\widetilde{A}\widetilde{F}}\frac{\partial  F}{\partial x^{A}} \frac{\partial G}{\partial p_{A}}.
\end{equation}

\begin{definition}
The triple $(M, \D, q)$ with  $M$ being a manifold,  $\D \in \Vect(M)$ an odd vector field and  $q \in C^{\infty}(M)$ an odd function such that:

\begin{equation}
  \D^{2} = \frac{1}{2}[\D, \D] =  q \D, \hspace{15pt}\textnormal{and}\hspace{15pt}  \D[q] =0,
\end{equation}

shall be called a \textbf{quasi Q-manifold}. The vector field $\D$ shall be known as an \textbf{almost homological vector field}. The odd function $q$ shall be known as the \textbf{curving function}.
\end{definition}

\begin{definition}
Let $(M, \D_{M}, q_{M})$ and $(N, \D_{N}, q_{N})$ be quasi Q-manifolds and let $\phi: M \rightarrow N$ be  a smooth map. Then $\phi$ is said to be a \textbf{morphism of quasi Q-manifolds} if and only if
\begin{enumerate}
\item $\D_{M}\left( \phi^{*}f \right) = \phi^{*}\left(\D_{N}f \right)$ for all $f \in C^{\infty}(N)$. That is the almost homological vector fields are $\phi$-related.
\item $\phi^{*}q_{N} = q_{M}$. That is the curving functions match.
\end{enumerate}
\end{definition}
Quasi Q-manifolds and their morphisms form a category. Also note that if $q=0$ then we have the category of \emph{Q-manifolds} and $\D$ is a \emph{homological vector field}, \cite{Alexandrov:1995kv}. The curving functions ``measure" the failure of the homological condition of $\D$ and thus represent a kind of ``curvature". The other extreme is to set $\D =0$ and then keep $q$ as some distinguished odd function. For example, one could consider \emph{(higher) Schouten manifolds} as examples of quasi Q-manifolds. The far extreme is the trivial structure  of $\D =0$ and $q=0$ and we recover the full category of supermanifolds.

\section{Main constructions}\label{sec:main constructions}

In this section we propose a definition of a Jacobi algebroid in terms of an odd Jacobi structure on the total space of $\Pi E^{*}$, given a vector bundle $E \rightarrow M$. It will turn out that this definition is equivalent to that given by Grabowski \& Marmo \cite{Grabowski2001} (also see Iglesias \& Marrero \cite{Iglesias2001}). We postpone the details of this equivalence to the next section and take the following definition as the starting point of this work.

\begin{definition}
A vector bundle $E \rightarrow M$ is said to have the structure of a \textbf{Jacobi algebroid} if and only if the total space of $\Pi E^{*}$ comes equipped with a weight minus one odd Jacobi structure.
\end{definition}

Recall that an odd Jacobi structure on a manifold is a pair of odd functions on the total space of the cotangent bundle quadratic and linear in the fibre coordinates together with a series of conditions expressed in terms of the canonical Poisson bracket. Let us employ natural local coordinates $(x^{A}, \eta_{\alpha}, p_{A}, \pi^{\alpha})$ on the total space of $T^{*}(\Pi E^{*})$. The weight is assigned  as $\w(x^{A}) = 0$, $\w(p_{A})=0$, $\w(\eta_{\alpha}) = +1$ and $\w(\pi^{\alpha}) = -1$.  This is the \emph{natural weight} associated with the vector bundle structure $E^{*} \rightarrow M$. The parity of the coordinates is given by $ \widetilde{x}^{A}=  \widetilde{A}$, $\widetilde{\eta}_{\alpha}= (\widetilde{\alpha} +1)$, $\widetilde{p}_{A}= \widetilde{A}$ and  $\widetilde{\pi}^{\alpha} =  (\widetilde{\alpha}+1)$. In these natural local coordinates the odd Jacobi structure is given by

\begin{eqnarray}
S &=&(-1)^{\widetilde{\alpha}}\pi^{\alpha}Q_{\alpha}^{A}(x)p_{A}+ (-1)^{\widetilde{\alpha} + \widetilde{\beta}}\frac{1}{2}\pi^{\alpha}\pi^{\beta}Q_{\beta \alpha}^{\gamma}\eta_{\gamma},\\
\nonumber  \Q &=& \pi^{\alpha}Q_{\alpha}(x),
\end{eqnarray}

which are both functions on the total space of $T^{*}(\Pi E^{*})$. The notation and the sign factors employed  make clear the relation with Lie algebroids.   \\

This structure satisfies the conditions:
\begin{enumerate}
\item $\{\Q,\Q  \}_{T^{*}(\Pi E^{*})} =0$. \label{homcond}
\item $\{ \Q, S \}_{T^{*}(\Pi E^{*})} = 0$.
\item $\{S, S  \}_{T^{*}(\Pi E^{*})} =  - 2 \Q S$.
\end{enumerate}

Setting $\Q =0$ means that $S$ is a Schouten structure and thus we have a genuine Lie algebroid. See \cite{Bruce2011} for further details of more general odd Jacobi structures over manifolds and how the above conditions are required in order to build odd Jacobi algebras. Note that due to the fact that the function $\Q$ does not contain conjugate variables the condition \ref{homcond}. is automatically satisfied.  This is not generally the case and typically \ref{homcond}. will be a non-trivial condition.\\

The Jacobi algebroid structure on the vector bundle $E \rightarrow M$ is directly equivalent to the existence of a weight minus one odd Jacobi bracket on $C^{\infty}(\Pi E^{*})$. That is the algebra of  ``multivector fields" comes equipped with the structure of an odd Jacobi algebra \emph{viz}

\begin{equation}
\nonumber \SN{X,Y}_{E}  = (-1)^{\widetilde{X}+1}\{  \{S, X   \}_{T^{*}(\Pi E^{*})} , Y\}_{T^{*}(\Pi E^{*})}- (-1)^{\widetilde{X}+1} \{ \Q, XY \}_{T^{*}(\Pi E^{*})},
\end{equation}

with $X,Y \in C^{\infty}(\Pi E^{*})$.\\

In natural local coordinates this bracket is given by

\begin{eqnarray}\nonumber
\SN{X,Y}_{E} &=& Q_{\alpha}^{A}\left((-1)^{(\widetilde{X}+ \widetilde{\alpha}+1)(\widetilde{A}+1) } \frac{\partial X}{\partial \eta_{\alpha}}\frac{\partial Y}{\partial x^{A}}  - (-1)^{(\widetilde{X}+1)\widetilde{\alpha}} \frac{\partial X}{\partial x^{A}}\frac{\partial Y}{\partial \eta_{\alpha}}\right)\\
\nonumber &-& (-1)^{(\widetilde{X}+1)\widetilde{\alpha}+ \widetilde{\beta}}Q_{\alpha \beta}^{\gamma}\eta_{\gamma}\frac{\partial X}{\partial\eta_{\beta}}\frac{\partial Y}{\partial \eta_{\alpha}}\\
\nonumber &+&(-1)^{\widetilde{X}}Q_{\alpha}\frac{\partial X}{\partial \eta_{\alpha}} Y  + X Q_{\alpha}\frac{\partial Y}{\partial \eta_{\alpha}}.
\end{eqnarray}

Where $X = X(x, \eta) =  X(x) + X^{\alpha}(x) \eta_{\alpha} + \frac{1}{2!}X^{\alpha \beta}(x) \eta_{\beta}\eta_{\alpha} + \cdots$ \emph{etc}. The above odd Jacobi bracket is the natural generalisation of the weight minus one Schouten bracket associated with a Lie algebroid, which itself is a generalisation of the Schouten--Nijenhuis bracket between multivector fields over a manifold.\\

\begin{theorem}
The existence of a Jacobi algebroid structure on the vector bundle  $E \rightarrow M$ is equivalent to $\Pi E$ being  a weight one quasi Q-manifold.
\end{theorem}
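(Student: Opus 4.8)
The plan is to make the claim concrete by transferring the structure functions $S$ and $\Q$ living on $T^*(\Pi E^*)$ to a vector field and function on $\Pi E$ via the canonical ``tilde''/double-vector-bundle identifications, and then to check the defining relations of a quasi Q-manifold directly from conditions \ref{homcond}.--3. above. Concretely, I would introduce natural local coordinates $(x^A, \xi^\alpha)$ on $\Pi E$ with $\w(x^A)=0$, $\w(\xi^\alpha)=+1$, and $\widetilde{\xi}^\alpha = \widetilde{\alpha}+1$, dual to the $\eta_\alpha$ on $\Pi E^*$. The odd function $\Q = \pi^\alpha Q_\alpha(x)$, being linear in the fibre momenta $\pi^\alpha$ conjugate to $\eta_\alpha$, corresponds under the standard identification $C^\infty(T^*(\Pi E^*))_{\text{lin in }\pi} \cong \Vect(\Pi E^*) \leftrightarrow$ (fibrewise-constant... ) — more precisely, I would use the well-known fact that weight-compatible fibrewise-polynomial functions on $T^*(\Pi E^*)$ that are at most quadratic in the momenta encode derived brackets, and that the homological-type vector field associated with a Lie algebroid on $\Pi E$ is $\D = \xi^\alpha Q_\alpha^A(x)\frac{\partial}{\partial x^A} - \frac12 \xi^\alpha\xi^\beta Q_{\beta\alpha}^\gamma(x)\frac{\partial}{\partial \xi^\gamma}$, while the curving function is simply $q = \xi^\alpha Q_\alpha(x) \in C^\infty(\Pi E)$, the ``same'' object as $\Q$ read on $\Pi E$ instead of $T^*(\Pi E^*)$.

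With these identifications in hand, the proof reduces to three computations. First, $\D$ is odd of weight $+1$: immediate from the weights and parities of $\xi^\alpha$, $Q_\alpha^A$, $Q_{\beta\alpha}^\gamma$. Second, I would show $\D^2 = q\,\D$. The point is that $\D^2 = \frac12[\D,\D]$ is computed from the derived bracket of $S$ with itself, and condition 3, $\{S,S\}_{T^*(\Pi E^*)} = -2\Q S$, is exactly the statement that the self-commutator of the associated first-order operator on $\Pi E$ picks up the anomaly $q\,\D$ rather than vanishing; the factor and sign bookkeeping is the content here. I would verify it in coordinates by expanding $[\D,\D]$ and matching the $\frac{\partial}{\partial x}$-terms against the vanishing of $Q^A_{[\alpha}Q^B_{\beta]}$-type expressions coming from $\{S,S\}$'s quadratic part, and the $\frac{\partial}{\partial\xi}$-terms against the structure-Jacobi identity for $Q_{\beta\alpha}^\gamma$ deformed by $Q_\alpha$. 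Third, $\D[q]=0$ should drop out of condition 2, $\{\Q,S\}_{T^*(\Pi E^*)}=0$, since that bracket encodes precisely $\D$ acting on the linear function $q$ (together with the ``$\Q$ is $S$-closed'' compatibility); again a short coordinate check: $\D[q] = \xi^\alpha Q_\alpha^A \partial_A(\xi^\beta Q_\beta) - \frac12\xi^\alpha\xi^\beta Q_{\beta\alpha}^\gamma Q_\gamma$, and antisymmetrisation in $\alpha,\beta$ turns this into the $\pi$-independent part of $\{\Q,S\}$, which vanishes.

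For the converse, I would run the same dictionary backwards: given a weight-one quasi Q-manifold structure $(\Pi E, \D, q)$, weight considerations force $\D$ and $q$ to have exactly the coordinate form above with some functions $Q_\alpha^A(x)$, $Q_{\beta\alpha}^\gamma(x)$, $Q_\alpha(x)$ on $M$; define $S$ and $\Q$ by the displayed formulas; then $\D^2 = q\D$ and $\D[q]=0$ translate back, by the same computations read in reverse, into conditions 2 and 3, while condition \ref{homcond}. ($\{\Q,\Q\}=0$) is automatic because $\Q$ is independent of the variables conjugate to $x$ and of $\eta$ — as already noted in the text. So the equivalence is a genuine ``if and only if.''

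The main obstacle I anticipate is not conceptual but organisational: getting all the Koszul-type sign factors to agree between the bracket $\{\bullet,\bullet\}_{T^*(\Pi E^*)}$ (with its $\varepsilon=0$ conventions on a manifold whose coordinates $\eta_\alpha,\pi^\alpha$ have shifted parity) and the plain commutator of vector fields on $\Pi E$. In particular the identification $\Q \leftrightarrow q$ and $S \leftrightarrow \D$ must be set up so that the derived-bracket formula $\D(\bullet) = \{S,\bullet\}|_{\Pi E}$ (suitably restricted/projected) reproduces the coordinate expression with the correct signs, and one has to be careful that the anomaly term $-2\Q S$ in condition 3 maps to $+q\D$ and not $-q\D$ in $\D^2 = q\D$; a wrong sign here would be fatal, so I would pin down the dictionary on a low-weight example (say $\dim E = 1$, with $Q_{\beta\alpha}^\gamma$ absent) before asserting the general case. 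Everything else is a routine, if tedious, unwinding of definitions.
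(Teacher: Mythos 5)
Your proposal is correct and follows essentially the same route as the paper: transfer $(S,\Q)$ from $T^{*}(\Pi E^{*})$ to $T^{*}(\Pi E)$ via the canonical double vector bundle morphism $R$, undo the symbol map to obtain $(\D,q)$ on $\Pi E$, and read conditions 1--3 as the quasi Q-manifold axioms, with weight counting giving the converse. The only practical difference is that the paper avoids your term-by-term coordinate verification by first proving (Lemma A.1) that $R$ is a symplectomorphism, so the bracket identities and the symbol map's compatibility with commutators do all the sign bookkeeping at once; your flagged concern about the sign of the anomaly term ($-2\Q S$ versus $+q\D$) is resolved in the paper by the explicit extra minus sign in $q=-(-1)^{\widetilde{\alpha}}\xi^{\alpha}Q_{\alpha}$.
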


\begin{proof}
Recall that the canonical double vector bundle morphism

\begin{equation}\nonumber
T^{*}(\Pi E^{*}) \stackrel{R}{\longrightarrow} T^{*}(\Pi E),
\end{equation}

is a symplectomorphism between the respective canonical symplectic structures.  We place details of this morphism in  Appendix(\ref{A1}). Thus we can \emph{move} the odd Jacobi structure from $\Pi E^{*}$ to $\Pi E$. However the resulting structure over $\Pi E$ will not be a genuine odd Jacobi structure as the degree in momenta (fibre coordinates of the cotangents) is not conserved under the canonical double vector bundle morphism.\\

Let us employ natural local coordinates $(x^{A}, \xi^{\alpha}, p_{A}, \pi_{\alpha})$ on $T^{*}(\Pi E)$. The weight of the coordinates is assigned as $\w(\xi^{\alpha})=-1$ and $\w(\pi_{\alpha})= +1$. The parities are $\widetilde{\xi}^{\alpha} = \widetilde{\pi}_{\alpha}= (\widetilde{\alpha}+1)$.  Then the canonical double vector bundle morphism is given by

\begin{equation}
\nonumber R^{*}\left( \pi_{\alpha} \right) =  \eta_{\alpha}, \hspace{25pt}  R^{*}\left(\xi^{\alpha}  \right)  = (-1)^{\widetilde{\alpha}} \pi^{\alpha}.
\end{equation}

Then let us consider

\begin{eqnarray}
\hat{S} := (R^{-1})^{*}S  &=& \xi^{\alpha}Q_{\alpha}^{A}(x) p_{A}+ \frac{1}{2}\xi^{\alpha}\xi^{\beta}Q_{\beta \alpha}^{\gamma}(x)\eta_{\gamma},\\
\nonumber \hat{\Q} := (R^{-1})^{*}\Q &=& (-1)^{\widetilde{\alpha}}\xi^{\alpha}Q_{\alpha}(x),
\end{eqnarray}

both of which are functions on the total space of $T^{*}(\Pi E)$. Note that the function $\hat{S}$ is now linear in moneta and that $\hat{\Q}$ is independent of momenta.  As $R$ is a symplectomorphism we naturally have

\begin{equation}\nonumber
\{ \hat{S}, \hat{S} \}_{T^{*}(\Pi E)} =  - 2 \hat{\Q}\hat{S},  \hspace{15pt}\textnormal{and}\hspace{15pt} \{\hat{\Q}, \hat{S} \}_{T^{*}(\Pi E)}=0.
\end{equation}

Then we can ``undo" the symbol map which gives an odd vector field on $\Pi E$ and an odd function linear in the fibre coordinate:

\begin{eqnarray}
\hat{S}&\longrightarrow& \D = \xi^{\alpha}Q_{\alpha}^{A}(x) \frac{\partial}{\partial x^{A}}+ \frac{1}{2}\xi^{\alpha}\xi^{\beta}Q_{\beta \alpha}^{\gamma}(x) \frac{\partial}{\partial \xi^{\gamma}} \in \Vect(\Pi E),\\
\nonumber \hat{\Q} &\longrightarrow& q =  - (-1)^{\widetilde{\alpha}}\xi^{\alpha}Q_{\alpha}(x)  \in C^{\infty}(\Pi E).
\end{eqnarray}

Note the extra minus sign in the definition of $q$. As the symbol map takes commutators of vector fields to Poisson brackets etc., it is not hard to see that the conditions that $(S, \Q)$  be an odd Jacobi structure translates to  $\Pi E$ being  a quasi Q-manifold: \\
\begin{center}
\begin{tabular}{lcl}
 $[\D,\D] =  2 q\D$,  & and  & $\D[q] =0$.
\end{tabular}
\end{center}
The grading is with respect to the natural grading associated with the vector bundle structure $E \rightarrow M$. That is we assign the weight as $\bar{\w}(x^{A}) =0$ and $\bar{\w}(\xi^{\alpha})= 1$. Note that $\bar{\w} = - \w$.
\proofend
\end{proof}


\section{From Jacobi algebroids to Lie algebroids in the presence of  a 1-cocycle}\label{sec:from jacobi algebroids}

In this section we  in essence restate  Grabowski \& Marmo's theorem 5 of \cite{Grabowski2001} giving a one-to-one correspondence between Jacobi algebroids and Lie algebroids in the presence of a 1-cocycle.\\

As we are considering the total space $\Pi E$ to be a graded manifold we naturally have an Euler vector field, which counts the weight of objects via it's Lie derivative. In natural local coordinates the Euler vector field is given by $E = \xi^{\alpha}\frac{\partial}{\partial \xi^{\alpha}}$ as we have assigned weight $\bar{\w}(x) =0$ and $\bar{\w}(\xi)= 1$. A ``differential form"  $\omega \in C^{\infty}(\Pi E)$ is homogeneous and of  weight $p$ if $E(\omega)= p \omega$. In a similar way, a vector field, $V \in \Vect(\Pi E)$ is homogeneous and of weight $r$ if $[E,V] = r V$. The action of the Euler vector field can be extended to higher tensor objects, but we will have no call to use it in this work. In relation to Jacobi algebroids, we will be exclusively  interested in object of weight one. Such objects are invariant under the action of the Euler vector field, or in more classical language they are \emph{linear} objects.

\begin{proposition}\label{prop1}
Let $(\Pi E , \D, q)$ be the weight one quasi Q-manifold associated with a  Jacobi algebroid. Then
\begin{equation}\nonumber
Q :=  \D - q E,
\end{equation}
defines a homological vector field on $\Pi E$ of weight one and thus a Lie algebroid structure on $E \rightarrow M$. Furthermore, we have $Q(q)=0$ and thus we have a Lie algebroid in the presence of a 1-cocycle.
\end{proposition}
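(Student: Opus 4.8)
The plan is to verify directly that $Q := \D - qE$ is homological of weight one by computing $[Q,Q]$ using the two structural identities $[\D,\D] = 2q\D$ and $\D[q]=0$ available from the Theorem, together with the elementary properties of the Euler vector field $E$. First I would record the weight bookkeeping: since $q$ is linear in the fibre coordinates $\xi^\alpha$ it has weight one, i.e. $E[q] = q$ (equivalently $[E, qE] = $ nothing new once we track $[E,E]=0$), and $\D$ has weight one, so $[E,\D] = \D$; these give $[E,Q] = Q$, establishing that $Q$ is homogeneous of weight one once we know it is a well-defined vector field (which it is, being a sum of vector fields). The 1-cocycle claim $Q(q) = 0$ is quick: $Q[q] = \D[q] - q\,E[q] = 0 - q\cdot q = -q^2 = 0$ since $q$ is odd, so $q^2 = \frac12[q,q]_{C^\infty} = 0$ in the supercommutative algebra $C^\infty(\Pi E)$.

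The main computation is $[Q,Q] = [\D - qE, \D - qE]$. Expanding the graded commutator bilinearly (all of $\D$, $q$, $E$-combinations here are odd: $\D$ odd, $qE$ odd), I get
\begin{equation}\nonumber
[Q,Q] = [\D,\D] - [\D, qE] - [qE, \D] + [qE, qE].
\end{equation}
By graded symmetry of the commutator of odd vector fields, $[\D,qE] = [qE,\D]$, so the cross terms combine to $-2[\D, qE]$. Now $[\D, qE] = \D[q]E + q[\D,E]$ using the Leibniz-type rule for bracketing a vector field with a function times a vector field; since $\D[q] = 0$ and $[\D, E] = -[E,\D] = -\D$, this gives $[\D, qE] = -q\D$. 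For the last term, $[qE, qE] = q[E, qE] + (\text{sign})\,(qE)[q]\,E$; here $(qE)[q] = q\,E[q] = q\cdot q = q^2 = 0$, and $[E, qE] = E[q]E + q[E,E] = qE$, so $[qE,qE] = q(qE) = q^2 E = 0$. Assembling: $[Q,Q] = 2q\D - 2(-q\D) + 0 = 2q\D + 2q\D$ — wait, that would be $4q\D$, so I must be careful with one sign; the correct accounting is $[\D,\D] - 2[\D,qE] = 2q\D - 2(-q\D)$ only if the cross-term coefficient is $-2$, but expanding $[\D-qE,\D-qE]$ the $\D$-with-$(-qE)$ terms appear as $[\D,-qE] + [-qE,\D] = -2[\D,qE] = 2q\D$, hence $[Q,Q] = 2q\D + 2q\D + 0$? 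The resolution is that one must instead get $[Q,Q] = 0$, which forces $[\D,\D] = 2[\D, qE] - [qE,qE]$, i.e. $2q\D = 2(-q\D)$, a contradiction — so the honest thing is to carry out the signs meticulously rather than by this rough sketch.

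The subtle point, and the step I expect to be the main obstacle, is precisely the sign in the Leibniz rule $[\D, qE] = \D[q]\,E + (-1)^{\widetilde{\D}\,\widetilde{q}}\,q\,[\D,E]$ and its interplay with the fact that $qE$ is the relevant odd field: since $\widetilde q = 1$ and $\widetilde\D = 1$, the sign $(-1)^{\widetilde\D\widetilde q} = -1$, giving $[\D,qE] = \D[q]E - q[\D,E] = 0 - q(-\D)$... I would work this through carefully using the conventions fixed in the Preliminaries, double-checking against the known $q=0$ case (where $Q = \D$ and $[\D,\D]=0$ must hold) and the coordinate expressions for $\D$, $q$, $E$ given just before the Proposition. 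Once the signs are pinned down the identity $[Q,Q]=0$ drops out from $[\D,\D]=2q\D$, $\D[q]=0$ and $q^2=0$; then weight-one homogeneity of $Q$ plus the Vaĭntrob correspondence (item \ref{homvect} of the Introduction) yields the Lie algebroid structure on $E \to M$, and $Q[q]=0$ with $q$ of weight one exhibits $q$ as a 1-cocycle for the associated Lie algebroid differential, completing the proof.
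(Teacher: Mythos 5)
Your strategy---expand $[Q,Q]$ bilinearly and feed in $[\D,\D]=2q\D$, $\D[q]=0$, $[E,\D]=\D$, $E[q]=q$ and $q^{2}=0$---is in substance the same computation the paper performs; the paper merely applies $Q^{2}$ directly to a test function $\omega\in C^{\infty}(\Pi E)$, which makes the sign bookkeeping automatic. But as written your argument does not close: the central computation ends in the contradiction ``$[Q,Q]=4q\D$'' together with a promise to redo the signs, and that sign is exactly where the only nontrivial content of the proposition lives. The culprit is the missing Koszul sign in the Leibniz rule for bracketing against a function times a vector field. With the correct rule
\begin{equation}\nonumber
[\D, qE] \;=\; \D[q]\,E \;+\; (-1)^{\widetilde{\D}\,\widetilde{q}}\, q\,[\D,E] \;=\; 0 \;-\; q\bigl(-[E,\D]\bigr) \;=\; +\,q\D
\end{equation}
(not $-q\D$ as in your second paragraph), and with $[qE,qE]=0$ because every term in its expansion carries a factor of $q^{2}=0$, the expansion reads
\begin{equation}\nonumber
[Q,Q] \;=\; [\D,\D] \;-\;2[\D,qE]\;+\;[qE,qE] \;=\; 2q\D \;-\;2q\D\;+\;0\;=\;0 .
\end{equation}
You actually write down the correctly signed Leibniz rule in your final paragraph and evaluate it as far as ``$0-q(-\D)$''---which is $+q\D$ and resolves your own contradiction---but you stop there instead of substituting back into $[Q,Q]$. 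Until that substitution is made the proof is incomplete.

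The remaining pieces are fine as you state them: $[E,Q]=Q$ gives weight one, $Q[q]=\D[q]-q\,E[q]=-q^{2}=0$ gives the cocycle condition, and the correspondence between weight-one homological vector fields on $\Pi E$ and Lie algebroid structures on $E\rightarrow M$ (item 1 of the Introduction's list) converts $Q^{2}=0$ into the claimed Lie algebroid. A useful sanity check you could have run on your intermediate formula: in the expansion of $Q^{2}\omega$ the only delicate term is $\D(qE\omega)=\D[q]E(\omega)-q\,\D(E\omega)=-q\,\D(E\omega)$, which combines with $-qE(\D\omega)$ to give $-q[E,\D]\omega=-q\D\omega$, exactly cancelling $\D^{2}\omega=q\D\omega$; this is the paper's route and it never requires you to bracket $qE$ with itself.
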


\begin{proof}
The weight conditions are clear from the definitions. We need to prove that $Q$ is homological. Explicitly
\begin{eqnarray}
\nonumber Q^{2}\omega &=& \D^{2}\omega + q E\left(q E(\omega) \right) - \D\left ( qE(\omega)\right)- qE\left( \D \omega \right)\\
\nonumber &=& \D^{2}\omega - q [E, \D]\omega\\
\nonumber &=& \D^{2}\omega - q \D \omega,
\end{eqnarray}
for any $\omega \in C^{\infty}(\Pi E)$. Then using the fact that we have a quasi Q-manifold gives
\begin{equation}
\nonumber Q^{2}=0.
\end{equation}
It is clear that $Q(q)=0$ and thus we have a 1-cocycle.
\proofend
\end{proof}

\begin{proposition}\label{prop2}
Let $(\Pi E, Q)$ be a Lie algebroid and let $\phi \in C^{\infty}(\Pi E)$ be an odd 1-cocycle, that is $E(\phi)=1$,  $Q(\phi) =0$ and $\widetilde{\phi}=1$. Then
\begin{equation}\nonumber
\left( \Pi E, \D = Q + \phi E, q = \phi \right),
\end{equation}
defines a quasi Q-manifold of weigh one, and thus a Jacobi algebroid.
\end{proposition}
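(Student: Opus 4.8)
The statement is essentially the converse of Proposition~\ref{prop1}, so the plan is to run the same computation in reverse. First I would check the two grading conditions: since $Q$ has weight one and $\phi$ has weight one (i.e.\ $E(\phi)=\phi$, as $\phi$ is a linear function on $\Pi E$), the vector field $\phi E$ has weight one, hence $\D = Q + \phi E$ is a weight one odd vector field, and $q=\phi$ is manifestly an odd function of weight one. These are immediate from the definitions of the Euler vector field and homogeneity recalled just above Proposition~\ref{prop1}.

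The substance is verifying the two defining relations of a quasi Q-manifold, namely $\D^2 = q\D$ and $\D[q]=0$, for the proposed data. For the second, $\D[q] = (Q + \phi E)[\phi] = Q[\phi] + \phi\, E[\phi] = 0 + \phi\cdot\phi = \phi^2 = 0$ since $\phi$ is odd; here I use the cocycle condition $Q(\phi)=0$ and $E(\phi)=\phi$. For the first relation, acting on an arbitrary $\omega \in C^\infty(\Pi E)$ I would expand
\begin{equation}\nonumber
\D^2\omega = Q^2\omega + Q(\phi E(\omega)) + \phi E(Q\omega) + \phi E(\phi E(\omega)).
\end{equation}
Using $Q^2 = 0$ (the Lie algebroid / homological condition), the Leibniz rule $Q(\phi E(\omega)) = Q(\phi)E(\omega) - \phi\, Q(E(\omega)) = -\phi\, Q(E(\omega))$ (the sign because $Q$ and $\phi$ are both odd), and $E(\phi E(\omega)) = E(\phi)E(\omega) + \phi E^2(\omega) = \phi E(\omega) + \phi E^2(\omega)$, the four terms collapse. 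The middle two combine into $\phi\big(E(Q\omega) - Q(E\omega)\big) = \phi[E,Q]\omega = \phi\, Q\omega$ because $Q$ has weight one; and the last term gives $\phi^2(E(\omega)+E^2(\omega)) = 0$ since $\phi$ is odd. Hence $\D^2\omega = \phi\, Q\omega$. On the other hand $q\D\omega = \phi(Q + \phi E)\omega = \phi\, Q\omega + \phi^2 E(\omega) = \phi\, Q\omega$, so $\D^2 = q\D$ as required.

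Finally, having exhibited $(\Pi E, \D, q)$ as a weight one quasi Q-manifold, Theorem~\ref{sec:main constructions} applies in the reverse direction: a weight one quasi Q-manifold structure on $\Pi E$ is equivalent to a Jacobi algebroid structure on $E \to M$, so we are done. The only mildly delicate point is bookkeeping the Koszul signs in the Leibniz rule for $Q$ acting on the product $\phi\cdot E(\omega)$ and in $[E,Q]$; these are forced by $\widetilde{Q}=\widetilde{\phi}=1$ and present no real obstacle. I expect no genuine difficulty — the proof is a short sign-careful verification mirroring Proposition~\ref{prop1}.
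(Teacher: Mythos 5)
Your proof is correct and takes essentially the same route as the paper's: the identical four-term expansion of $\D^{2}\omega$, collapsed using $Q^{2}=0$, the cocycle condition $Q(\phi)=0$, the weight-one relation $[E,Q]=Q$, and $\phi^{2}=0$. You merely make explicit the Koszul-sign bookkeeping and the final identification $\phi(Q+\phi E)\omega=\phi Q\omega$ that the paper leaves implicit, so there is nothing to add.
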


\begin{proof}
The conditions on the weights is clear. Then via calculation we obtain
\begin{eqnarray}
\nonumber \D^{2}\omega &=& Q^{2}\omega + \phi E \left(\phi E \omega  \right) + Q \left( \phi E \omega\right) + \phi E \left( Q\omega \right)\\
\nonumber &=& \phi [E,Q]\omega  = \phi \left(Q + \phi E  \right)\omega  = \phi \D \omega.
\end{eqnarray}
The 1-cocycle condition implies $\D(\phi)=0$.
\proofend
\end{proof}

\begin{theorem}(\textbf{Grabowski--Marmo \cite{Grabowski2001}})
There is a one-to-one correspondence between Jacobi algebroids and Lie algebroids in the presence of an odd 1-cocycle.
\end{theorem}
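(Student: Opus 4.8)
The plan is to assemble the Grabowski--Marmo correspondence directly from the two propositions just proved, treating it as essentially a bookkeeping statement once the constructions are in place. First I would observe that Proposition~\ref{prop1} and Proposition~\ref{prop2} supply maps in both directions: from a Jacobi algebroid $(\Pi E, \D, q)$ one passes to the pair $(\Pi E, Q := \D - q E)$, which is a weight-one Lie algebroid together with the odd $1$-cocycle $q$; conversely, from a weight-one Lie algebroid $(\Pi E, Q)$ equipped with an odd $1$-cocycle $\phi$ one passes to $(\Pi E, \D := Q + \phi E, q := \phi)$, which is a weight-one quasi Q-manifold and hence a Jacobi algebroid. So the content of the theorem is precisely that these two assignments are mutually inverse.

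Next I would check the two composites. Starting from $(\Pi E, \D, q)$, applying Proposition~\ref{prop1} gives $(Q, \phi) = (\D - qE, q)$; feeding this into Proposition~\ref{prop2} returns $\D' = Q + \phi E = (\D - qE) + qE = \D$ and $q' = \phi = q$, so one composite is the identity. Starting instead from $(\Pi E, Q, \phi)$, applying Proposition~\ref{prop2} gives $(\D, q) = (Q + \phi E, \phi)$; feeding this into Proposition~\ref{prop1} returns $Q' = \D - qE = (Q + \phi E) - \phi E = Q$ and the associated cocycle is $q = \phi$, so the other composite is also the identity. Both reductions use only $E(\phi)=1$ (so that $\phi E$ is the correct weight-one correction) and the already-established fact that the constructions land in the right categories; no further computation is needed beyond cancelling the $qE$ term.

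Finally I would remark that this correspondence is compatible with the equivalence, recorded in Section~\ref{sec:main constructions}, between Jacobi algebroid structures on $E \to M$ and weight-one quasi Q-manifold structures on $\Pi E$, and with Va\u{\i}ntrob's identification of Lie algebroids on $E \to M$ with weight-one homological vector fields on $\Pi E$, so that the bijection at the level of homological data transports to the bijection asserted in the theorem between the geometric objects themselves. One should also note that under these assignments a morphism of Jacobi algebroids corresponds to a morphism of the underlying Lie algebroids intertwining the respective $1$-cocycles, since $\phi^{*}$ commutes with the Euler vector fields and hence sends $\D_N = Q_N + \phi_N E_N$ to $\D_M = Q_M + \phi_M E_M$ exactly when it sends $Q_N$ to $Q_M$ and $\phi_N$ to $\phi_M$.

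I do not expect a genuine obstacle here: the only thing to be careful about is that the inverse constructions require the Lie algebroid to be \emph{weight one}, i.e.\ that $\D$ be homogeneous of weight one so that $Q = \D - qE$ is again homogeneous of weight one, and that $q$ (respectively $\phi$) be homogeneous of weight one; these homogeneity conditions are exactly what make the term $qE$ weight-preserving and are already verified in Propositions~\ref{prop1} and~\ref{prop2}. The mild subtlety, rather than a difficulty, is simply to state clearly which categories of objects (weight-one Lie algebroids, odd weight-one $1$-cocycles) are being matched, so that ``one-to-one correspondence'' is unambiguous.
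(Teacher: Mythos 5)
Your proposal is correct and follows essentially the same route as the paper, which states this theorem as an immediate consequence of Propositions~3.1 and~3.2 without further argument: the two constructions $Q = \D - qE$ and $\D = Q + \phi E$ are manifestly mutually inverse. Your explicit verification of the two composites and the remark on weight-one homogeneity simply make visible what the paper leaves implicit.
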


We must again  remark that everything here is done in the category of supermanifolds and that we have both Grassmann even and odd cocycles. For the classical case where $E \rightarrow M$ is  in the category of pure even classical manifolds 1-cocycles are necessarily odd. Thus the above propositions and theorem include the classical structures.\\

For clarity let us examine the association of a Lie algebroid in the presence of a 1-cocycle with a Jacobi algebroid in natural local  coordinates. It is not hard to see that given $\D$ and $q$ we have

\begin{eqnarray}
Q &=& \xi^{\alpha} Q_{\alpha}^{A} \frac{\partial}{\partial x^{A}} + \frac{1}{2}\left(\xi^{\alpha}\xi^{\beta} Q_{\beta \alpha}^{\gamma} +  (-1)^{\widetilde{\alpha}} 2 \xi^{\alpha}Q_{\alpha} \xi^{\gamma}  \right)\frac{\partial}{\partial \xi^{\gamma}} \in \Vect(\Pi E),\\
\nonumber \phi &=& (-1)^{\widetilde{\alpha} +1}\xi^{\alpha}Q_{\alpha}.
\end{eqnarray}

By careful symmetrisation we see that building the Lie algebroid structure on $\Pi E$ associated with a Jacobi algebroid is essentially described by the replacement
\begin{equation}
\nonumber \D  \longrightarrow Q,
\end{equation}
viz
\begin{equation}
\nonumber Q_{\beta \alpha}^{\gamma} \longrightarrow Q_{\beta \alpha}^{\gamma} - (-1)^{\widetilde{\alpha}+ \widetilde{\beta}}\left(\delta_{\alpha}^{\:\: \gamma}Q_{\beta} + (-1)^{(\widetilde{\alpha}+1)(\widetilde{\beta}+1)} Q_{\alpha}\delta_{\beta}^{\:\: \gamma}  \right).
\end{equation}
One can then more-or-less read off the Lie bracket on the sections of $E$ and the anchor map $a : \Gamma(E) \rightarrow \Vect(M)$.  Picking a basis of sections $(s_{\alpha})$ for $\Gamma(E)$ and being intentionally slack with the signs we have

\begin{eqnarray}
\nonumber [s_{\alpha}, s_{\beta}] &=& \pm Q_{\alpha \beta}^{\gamma} s_{\gamma} \pm Q_{\alpha}s_{\beta} \pm s_{\alpha}Q_{\beta},\\
\nonumber a(s_{\alpha}) &=& \pm Q_{\alpha}^{A}\frac{\partial}{\partial x^{A}}.
\end{eqnarray}

Dual to this one can consider the associated Schouten structure which is given by

\begin{equation}
\bar{S} = (-1)^{\widetilde{\alpha}} \pi^{\alpha}Q_{\alpha}^{A}p_{A} + \frac{1}{2}\left( (-1)^{\widetilde{\alpha}+ \widetilde{\beta}}\pi^{\alpha}\pi^{\beta}Q_{\beta \alpha}^{\gamma} + (-1)^{\widetilde{\gamma}}2 \pi^{\alpha} Q_{\alpha}\pi^{\gamma} \right)\eta_{\gamma} \in C^{\infty}(T^{*}(\Pi E^{*})).
\end{equation}

Similarly, the 1-cocycle becomes $\bar{\phi} = - \pi^{\alpha}Q_{\alpha}$ and it is not hard to see that

\begin{eqnarray}
\nonumber \{\bar{S}, \bar{S}  \}_{T^{*}(\Pi E^{*})} &=& 0,\\
\nonumber \{ \bar{S} , \bar{\phi} \}_{T^{*}(\Pi E^{*})} &=& 0.
\end{eqnarray}

\begin{corollary}\label{cor:1}
If $\mathfrak{g}$ is a Lie algebra with a distinguished odd 1-cocycle $\phi \in C^{\infty}(\Pi \mathfrak{g})$ then $\Pi \mathfrak{g}^{*}$ is a (formal) odd Jacobi manifold.
\end{corollary}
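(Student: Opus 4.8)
The plan is to exhibit a Lie algebroid-with-1-cocycle structure and then invoke Proposition \ref{prop2}. A Lie algebra $\mathfrak{g}$ is precisely a Lie algebroid over a point, so $\Pi\mathfrak{g}$ carries a homological vector field $Q$ of weight one; in a basis of $\mathfrak{g}$ with structure constants $c_{\alpha\beta}^{\gamma}$ this is the Chevalley--Eilenberg differential $Q = \tfrac{1}{2}\xi^{\alpha}\xi^{\beta}c_{\beta\alpha}^{\gamma}\partial/\partial\xi^{\gamma}$, and $C^{\infty}(\Pi\mathfrak{g})$ is the Chevalley--Eilenberg cochain complex $\Lambda^{\bullet}\mathfrak{g}^{*}$ with its usual grading (the weight). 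The hypothesis that $\phi\in C^{\infty}(\Pi\mathfrak{g})$ is a distinguished odd 1-cocycle says exactly that $\phi$ is linear in the $\xi$'s (weight one), Grassmann-odd, and $Q$-closed, i.e. $E(\phi)=\phi$, $\widetilde{\phi}=1$, $Q(\phi)=0$ -- which is the precise input needed for Proposition \ref{prop2}.

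First I would spell out that ``Lie algebra over a point'' is a special case of ``Lie algebroid'' in the sense used earlier: take $M = \{\mathrm{pt}\}$, so there are no base coordinates $x^{A}$, the anchor is trivially zero, and the only surviving piece of $Q$ is the term quadratic in the fibre coordinates $\xi^{\alpha}$, governed by the structure constants. One should note that $Q^{2}=0$ is equivalent to the Jacobi identity for $\mathfrak{g}$, so this genuinely is a Lie algebroid structure on $\Pi\mathfrak{g}$. Next I would simply apply Proposition \ref{prop2} with this $(\Pi\mathfrak{g},Q)$ and the given $\phi$: it produces the quasi Q-manifold $(\Pi\mathfrak{g}, \D = Q + \phi E, q = \phi)$ of weight one, hence a Jacobi algebroid structure on $\mathfrak{g}\to\{\mathrm{pt}\}$. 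Then by the very definition of Jacobi algebroid given in Section \ref{sec:main constructions}, the total space of $\Pi\mathfrak{g}^{*}$ carries a weight minus one odd Jacobi structure, i.e. $C^{\infty}(\Pi\mathfrak{g}^{*})$ is an odd Jacobi algebra; this is what is meant by calling $\Pi\mathfrak{g}^{*}$ an odd Jacobi manifold.

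The ``(formal)'' qualifier should be addressed: since $\mathfrak{g}^{*}$ may be infinite-dimensional or one may prefer not to worry about convergence/completion issues of the function algebra, the statement is understood at the level of the graded algebra $C^{\infty}(\Pi\mathfrak{g}^{*}) \cong \Lambda^{\bullet}\mathfrak{g}$ with its polynomial (formal) structure, exactly as the odd Jacobi bracket $\SN{\bullet,\bullet}_{E}$ was written above as a coordinate expression polynomial in the $\eta_{\alpha}$. I would remark that this is the odd/Jacobi analogue of the classical fact that $\mathfrak{g}^{*}$ carries a linear Poisson (Lie--Poisson) structure, now twisted by the cocycle $\phi$.

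I do not anticipate a serious obstacle here: the corollary is essentially a specialisation of the preceding propositions, and the only mild care needed is (i) checking that a Lie algebra really does fit the Lie-algebroid-over-a-point template used in Propositions \ref{prop1} and \ref{prop2}, and (ii) interpreting the word ``formal'' so that no analytic subtleties about $C^{\infty}(\Pi\mathfrak{g}^{*})$ arise. If anything is delicate it is purely bookkeeping: making sure the weight gradings on $\Pi\mathfrak{g}$ and $\Pi\mathfrak{g}^{*}$ are the dual ones ($\bar{\w} = -\w$, as noted in the proof of the main theorem) so that ``weight one quasi Q-manifold'' on $\Pi\mathfrak{g}$ corresponds to ``weight minus one odd Jacobi structure'' on $\Pi\mathfrak{g}^{*}$.
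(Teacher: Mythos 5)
Your proof is correct and follows exactly the route the paper intends: the corollary is the specialisation of Proposition~\ref{prop2} (together with Theorem~2.1 and the definition of a Jacobi algebroid) to a Lie algebroid over a point, and the paper itself offers no further argument beyond writing the resulting $Q$, $\phi$, $S$ and $\mathcal{Q}$ in coordinates. Your reading of the cocycle condition as $E(\phi)=\phi$ (weight one) and your handling of the ``(formal)'' qualifier are both consistent with the paper's conventions.
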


In local coordinates we have $Q = \frac{1}{2}\xi^{\alpha}\xi^{\beta}Q_{\beta \alpha}^{\gamma} \frac{\partial}{\partial \xi^{\gamma}} \in \Vect(\Pi \mathfrak{g})$ which \emph{encodes}   the Lie algebra structure on $\mathfrak{g}$. The 1-cocycle is given by $\phi = (-1)^{\widetilde{\alpha}} \xi^{\alpha} Q_{\alpha}$, the sign is picked for convenience. Then the odd Jacobi structure on $\Pi \mathfrak{g}^{*}$ is given by

\begin{eqnarray}
\nonumber S &=& (-1)^{\widetilde{\alpha}+ \widetilde{\beta}}\frac{1}{2} \pi^{\alpha}\pi^{\beta}Q_{\beta \alpha}^{\gamma}\eta_{\gamma}+ (-1)^{\widetilde{\gamma}}\pi^{\alpha}Q_{\alpha} \pi^{\gamma} \eta_{\gamma},\\
\nonumber \Q &=& \pi^{\alpha}Q_{\alpha}.
\end{eqnarray}

The associated odd Jacobi brackets should be thought of  generalisation of the ``Lie--Schouten" bracket on $\Pi \mathfrak{g}^{*}$  \cite{Voronov:2001qf} in the presence of a 1-cocycle. Both these odd brackets are then considered as odd generalisations of the ``Lie--Poisson--Berezin--Kirillov" bracket on $\mathfrak{g}^{*}$.

\begin{corollary}\label{cor:2}
If $(\Pi E, Q)$ is a Lie algebroid, then $\Pi E^{*} \otimes \mathbbmss{R}^{0|1}$ is a Jacobi algebroid.
\end{corollary}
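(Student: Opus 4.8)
The plan is to construct the relevant odd 1-cocycle on the Lie algebroid $(\Pi E, Q)$ after tensoring with $\mathbbmss{R}^{0|1}$, and then invoke Proposition~\ref{prop2}. The point is that $\Pi(E^{*}\otimes\mathbbmss{R}^{0|1})^{*} = \Pi(E\otimes\mathbbmss{R}^{0|1})$ carries an obvious weight one structure, and the auxiliary odd direction supplies the cocycle essentially for free: the extra odd fibre coordinate is itself a candidate for the curving function.

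First I would set up coordinates. Let $(x^{A},\xi^{\alpha})$ be natural coordinates on $\Pi E$ with $\bar{\w}(x)=0$, $\bar{\w}(\xi)=1$, so that $Q = \xi^{\alpha}Q_{\alpha}^{A}\partial_{x^{A}} + \tfrac12 \xi^{\alpha}\xi^{\beta}Q_{\beta\alpha}^{\gamma}\partial_{\xi^{\gamma}}$ encodes the Lie algebroid. Tensoring $E$ with $\mathbbmss{R}^{0|1}$ adjoins one new odd fibre coordinate, call it $\tau$, with $\bar{\w}(\tau)=1$ and $\widetilde{\tau}=1$; thus $\Pi(E\otimes\mathbbmss{R}^{0|1})$ has coordinates $(x^{A},\xi^{\alpha},\tau)$. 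Since $\mathbbmss{R}^{0|1}$ has no internal structure, the Lie algebroid differential $Q$ lifts to this larger bundle simply by acting trivially on $\tau$ — equivalently, $\Pi(E\otimes\mathbbmss{R}^{0|1})$ is a Lie algebroid whose bracket and anchor are those of $E$ extended so that the section corresponding to $\tau$ is central and has zero anchor. I would then take $\phi := \tau$, which satisfies $\bar{\w}(\phi)=1$ (i.e. $E(\phi)=\phi$, a weight one object), $\widetilde{\phi}=1$, and $Q(\phi)=0$ because $Q$ does not involve $\partial_{\tau}$. Hence $\phi$ is an odd 1-cocycle, and Proposition~\ref{prop2} immediately yields a weight one quasi Q-manifold structure $(\D = Q + \tau E,\ q = \tau)$ on $\Pi(E\otimes\mathbbmss{R}^{0|1})$, which by Definition is precisely a Jacobi algebroid structure on $E^{*}\otimes\mathbbmss{R}^{0|1}$ — noting $(E\otimes\mathbbmss{R}^{0|1})^{*} = E^{*}\otimes\mathbbmss{R}^{0|1}$ so the total spaces match up.

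I would then make the identification with Grabowski--Marmo's picture explicit to convince the reader the construction is the expected one: the extra odd coordinate plays the role of the $1$-cocycle, and the associated Schouten structure on $T^{*}(\Pi E^{*})$ gets deformed exactly by the $\pi^{\alpha}Q_{\alpha}\pi^{\gamma}\eta_{\gamma}$-type terms appearing earlier in the section, with $Q_{\alpha}$ now being the components of the cocycle (here the "delta in the odd direction"). The main obstacle — really the only subtlety — is bookkeeping: one must check that the weight assignment on the enlarged manifold is the natural one for the vector bundle $E^{*}\otimes\mathbbmss{R}^{0|1}\to M$ (so that $\tau$ genuinely has weight one, not weight zero), and that parity conventions make $q=\tau$ odd so the quasi Q-manifold axioms $[\D,\D]=2q\D$, $\D[q]=0$ hold on the nose; both are routine once the coordinates are fixed. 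Everything else is a direct appeal to Proposition~\ref{prop2}.
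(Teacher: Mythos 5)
Your proposal is correct and follows exactly the route the paper intends: extend the Lie algebroid trivially to $E\oplus\mathbbmss{R}$, take the new odd weight-one fibre coordinate $\tau$ as the 1-cocycle (killed by $Q$ since $Q$ has no $\partial_{\tau}$ component), and apply Proposition~3.2; the paper's displayed $S$ and $\mathcal{Q}$ in the coordinates $(x^{A},\eta_{\alpha},\tau,p_{A},\pi^{\alpha},\pi)$ are precisely the dual expression of this construction with the cocycle components concentrated in the new direction.
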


Let us employ natural local coordinates on $T^{*}(\Pi E^{*} \otimes \mathbbmss{R}^{0|1})$ which we denote as $(x^{A}, \eta_{\alpha}, \tau, p_{A}, \pi^{\alpha}, \pi)$.  The weight assigned to these extra coordinates is $\w(\tau) =1$ and $\w(\pi)=-1$.  In these local coordinates the weight minus one Jacobi structure is given by

\begin{eqnarray}
\nonumber S&=& (-1)^{\widetilde{\alpha}}\pi^{\alpha}Q_{\alpha}^{A} p_{A} + (-1)^{\widetilde{\alpha} + \widetilde{\beta}} \frac{1}{2}\pi^{\alpha}\pi^{\beta} Q_{\beta \alpha}^{\gamma}\eta_{\gamma} + \pi \pi^{\alpha}\eta_{\alpha},\\
\nonumber \mathcal{Q} &=& - \pi.
\end{eqnarray}

\noindent \textbf{Statement:} extending the fibres of the vector bundle $E \rightarrow M$ underlying a Lie algebroid by $\mathbbmss{R}$ allows one to directly construct a Jacobi algebroid structure on $\Pi E^{*} \otimes \mathbbmss{R}^{0|1}$.\\

Naturally the proceeding corollary includes Lie algebra as Lie algebroids over a point. Then, if $\mathfrak{g}$ is a Lie algebra one can \emph{extend} the vector space structure to $\mathfrak{g} \otimes \mathbbmss{R}$. Directly associated with this is the (formal) manifold $\Pi (\mathfrak{g}^{*} \otimes \mathbbmss{R})$ which comes with an odd Jacobi structure of weight minus one.

\begin{corollary}\label{cor:3}
Let $M$ be a manifold and $\mathbbmss{A}$ be a closed, odd one-form  (a flat Abelian connection). Then $\Pi TM$ can be made into quasi Q-manifold of weight one, or in other words, $\Pi T^{*}M$ can be considered as a Jacobi algebroid.
\end{corollary}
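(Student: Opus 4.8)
Corollary 3.4 is a special case of the general machinery: take $E = TM$ with its canonical Lie algebroid structure, so that $(\Pi TM, Q)$ is the de Rham complex $(\Omega^\bullet(M), d)$ and $Q = d$ is the de Rham differential, which has weight one with respect to the form degree. The plan is to exhibit the closed odd one-form $\mathbbmss{A}$ as an odd $1$-cocycle in the sense of Proposition \ref{prop2}, and then simply invoke that proposition. The translation is immediate: a closed odd one-form $\mathbbmss{A} \in \Omega^1(M)$ is precisely a function on $\Pi TM$ that is homogeneous of weight one under the Euler vector field $E$, is Grassmann-odd, and satisfies $dA = 0$, i.e. $Q(\mathbbmss{A}) = 0$. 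These are exactly the hypotheses of Proposition \ref{prop2}.

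Concretely, I would first fix natural local coordinates $(x^A, \xi^A)$ on $\Pi TM$, with $\w(x^A) = 0$, $\w(\xi^A) = 1$, and $\widetilde{\xi}{}^A = \widetilde{A} + 1$, so that the de Rham differential reads $Q = \xi^A \partial/\partial x^A$ and a one-form is $\mathbbmss{A} = \xi^A A_A(x)$. Being closed, $d\mathbbmss{A} = 0$, is the cocycle condition $Q(\mathbbmss{A}) = 0$; being odd forces $\widetilde{A_A} = \widetilde{A}$ so that $\mathbbmss{A}$ is a Grassmann-odd function; and it is manifestly weight one, $E(\mathbbmss{A}) = \mathbbmss{A}$. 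Then Proposition \ref{prop2} produces the quasi Q-manifold
\begin{equation}\nonumber
\left(\Pi TM, \; \D = d + \mathbbmss{A}\, E, \; q = \mathbbmss{A}\right),
\end{equation}
i.e. $\D$ is the deformed differential and the curving function is $\mathbbmss{A}$ itself. By Theorem \ref{prop2}'s conclusion this is a weight one quasi Q-manifold, and then by Theorem 2.6 (the equivalence between weight one quasi Q-manifolds on $\Pi E$ and Jacobi algebroid structures on $E \to M$, applied with $E = TM$, hence $\Pi E^* = \Pi T^*M$) this is the same as a Jacobi algebroid structure on $\Pi T^*M$. This gives both halves of the ``in other words'' in the statement.

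I would also note in passing the dual picture for concreteness: on $T^*(\Pi T^*M)$ with coordinates $(x^A, \eta_A, p_A, \pi^A)$ the associated weight minus one odd Jacobi structure is $S = (-1)^{\widetilde{A}}\pi^A p_A + (-1)^{\widetilde{A}} A_A(x)\, \pi^A \pi^B \eta_B$ (the first term encoding the canonical Schouten bracket on multivector fields) together with $\mathcal{Q} = \pi^A A_A(x)$; the closedness of $\mathbbmss{A}$ is exactly what is needed for $\{S,S\} = -2\mathcal{Q}S$ and $\{\mathcal{Q},S\} = 0$ to hold. But strictly this is optional — the cleanest proof is the one-line reduction to Proposition \ref{prop2}.

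The only genuine point requiring care — and the one I would flag as the ``obstacle,'' though it is minor — is the parity bookkeeping: for $\mathbbmss{A}$ to be an \emph{odd} cocycle in the supergeometric sense of Proposition \ref{prop2} one needs the component functions $A_A$ to have parity $\widetilde{A}$ (not $\widetilde{A}+1$), so that $\xi^A A_A$ is Grassmann-odd; this is exactly what is meant by ``odd one-form'' here, but it is worth stating explicitly since in the purely even classical case every one-form is automatically odd as a function on $\Pi TM$ and the distinction is invisible. Once this is pinned down, closedness of $\mathbbmss{A}$ is literally the equation $Q(\mathbbmss{A}) = 0$, and there is nothing further to check — the corollary is a direct instantiation of Proposition \ref{prop2} with the tangent Lie algebroid.
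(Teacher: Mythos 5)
Your proposal is correct and takes essentially the same route as the paper: the corollary is obtained precisely by applying Proposition \ref{prop2} to the tangent Lie algebroid $(\Pi TM, d)$ with the closed odd one-form $\mathbbmss{A}$ serving as the odd 1-cocycle, yielding $\D = d + \mathbbmss{A}\,E$ and $q = \mathbbmss{A}$, which are exactly the local formulas the paper displays in lieu of a separate proof. The only cosmetic discrepancy lies in your optional dual picture, where the paper writes $\mathcal{Q} = -p_{*}^{A}\mathbbmss{A}_{A}$ with a minus sign to stay consistent with the conventions of its main theorem; this does not affect the argument.
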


In natural local coordinates $(x^{A}, dx^{A})$ on $\Pi TM$, the quasi Q-manifold structure is given by

\begin{eqnarray}
\nonumber \D &=& d + \mathbbmss{A} \:E\\
\nonumber &=& dx^{A} \frac{\partial}{\partial x^{A}} + dx^{B}\mathbbmss{A}_{B} \: dx^{A} \frac{\partial}{\partial dx^{A}},\\
\nonumber q &=& \mathbbmss{A} = dx^{B}\mathbbmss{A}_{B}.
\end{eqnarray}

The weight here is simply assigned as $\w(x^{A})=0$ and $\w(dx^{A})=1$. Picking natural local coordinates $(x^{A}, x^{*}_{A}, p_{A}, p^{*}_{A})$ on $T^{*}(\Pi T^{*}M)$ allows us to write the corresponding odd Jacobi structure on $\Pi T^{*}M$ as

\begin{eqnarray}
\nonumber S &=& (-1)^{\widetilde{A}}p_{*}^{A}p_{A} + (-1)^{\widetilde{B}}p_{*}^{B}\mathbbmss{A}_{B}\: p_{*}^{A}x^{*}_{A},\\
\nonumber \mathcal{Q} &=& - p_{*}^{A}\mathbbmss{A}_{A}.
\end{eqnarray}

Note that the first term of the almost Schouten structure is the canonical Schouten structure on the anticotangent bundle.

\section{``Schoutenisation" and Lie algebroids}\label{sec:shoutenisation}

In this section we show that given arbitrary Jacobi algebroid one can extend the structure via a  ``Schoutenisation" process to construct a genuine Lie algebroid.  The constructions here mimic very closely the ``Poissonisation" of a classical Jacobi manifold.  Consider the manifold $T^{*}(\Pi E^{*} \otimes \mathbbmss{R})$ which we equip with local coordinates $(x^{A}, \eta_{\alpha}, t,p_{A}, \pi^{\alpha}, p)$. The weight we assign as:   \\

\begin{tabular}{|l ||l| }
\hline
$ \w(x^{A}) = 0$  & $\w(p_{A}) = 0$ \\
$ \w(\eta_{a})=1$  & $\w(\pi^{a}) =-1$\\
$\w(t)=0$ &   $\w(p)=0$\\
\hline
\end{tabular}\\

\begin{proposition}
Let $(\Pi E^{*}, S, \mathcal{Q})$ be a Jacobi algebroid. Then $\Pi E^{*} \otimes \mathbbmss{R}$ is a weight minus one Schouten manifold where the Schouten structure is given by
\begin{equation}
\bar{S} = e^{-t} \left( S - \mathcal{Q}p \right).
\end{equation}
\end{proposition}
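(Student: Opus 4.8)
The plan is to verify directly that $\bar S = e^{-t}(S - \mathcal{Q}p)$ defines a Schouten structure on $T^*(\Pi E^* \otimes \mathbbmss{R})$, that is, that $\bar S$ is odd, of weight minus one, quadratic in the fibre (momentum) variables of the cotangent bundle, and satisfies $\{\bar S, \bar S\}_{T^*(\Pi E^* \otimes \mathbbmss{R})} = 0$. The parity is clear since $S$, $\mathcal{Q}$, $p$ are all odd and $t$ is even; the weight condition follows because $\w(t) = \w(p) = 0$, $S$ has weight $-1$ and $\mathcal{Q}p$ has weight $-1 + 0 = -1$. Since $S$ is quadratic in $(p_A, \pi^\alpha)$ and $\mathcal{Q}$ is linear in $\pi^\alpha$, the product $\mathcal{Q}p$ is quadratic in $(\pi^\alpha, p)$, so $\bar S$ is indeed a legitimate Schouten (rather than merely Jacobi) candidate once we drop the exponential prefactor's effect on homogeneity — note $e^{-t}$ is a function of the base coordinate $t$ only, so it does not spoil fibrewise quadraticity.

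The heart of the matter is the master equation $\{\bar S, \bar S\} = 0$. I would expand using bilinearity and the Leibniz rule for the canonical Poisson bracket on $T^*(\Pi E^* \otimes \mathbbmss{R})$, keeping in mind that the bracket on the extended manifold restricts to the old one on functions not involving $(t,p)$, and that $\{t, p\} = \pm 1$ (the only new nonzero elementary bracket), while $t$ and $p$ Poisson-commute with all the old coordinates. Writing $\{\bar S,\bar S\} = e^{-t}\{S - \mathcal{Q}p,\, e^{-t}(S-\mathcal{Q}p)\}$ and distributing the derivative of $e^{-t}$ via the Leibniz rule, the $\partial/\partial p$ acting on one factor hits $e^{-t}$ in the other through the $\partial/\partial t$ slot, producing terms proportional to $e^{-2t}$ times combinations like $\mathcal{Q}\cdot(S - \mathcal{Q}p)$ together with $\{S,S\}$, $\{S,\mathcal{Q}\}$ and $\{\mathcal{Q},\mathcal{Q}\}$. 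I expect the bookkeeping to collapse to
\[
\{\bar S, \bar S\} = e^{-2t}\Big( \{S,S\} + 2\mathcal{Q}S \Big) + (\text{terms in } \{S,\mathcal{Q}\}) + (\text{terms in } \{\mathcal{Q},\mathcal{Q}\}p),
\]
and then the three defining conditions of an odd Jacobi structure — $\{\mathcal{Q},\mathcal{Q}\} = 0$, $\{\mathcal{Q},S\} = 0$, $\{S,S\} = -2\mathcal{Q}S$ — make each bracketed group vanish identically.

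The main obstacle will be sign discipline: getting the Koszul signs right when commuting $\partial/\partial p$ past odd factors and when identifying how $\{\,\cdot\,, e^{-t}(\cdot)\}$ redistributes, since $\bar S$ is odd and the bracket carries an intrinsic parity shift. A clean way to manage this is to introduce $\tilde S := e^{-t}\bar S = e^{-2t}(S - \mathcal{Q}p)$ as intermediate notation, or better, to note that conjugation $F \mapsto e^{-t}F$ of functions corresponds (up to the $p$-derivative correction) to a shift, so that the computation mirrors the classical Poissonisation of a Jacobi manifold almost verbatim; I would remark on this parallel rather than reprove it from scratch. Finally I would record that the vanishing of $\{\bar S,\bar S\}$ is exactly the integrability condition making $C^\infty(\Pi E^* \otimes \mathbbmss{R})$ a Schouten algebra, hence $\Pi E^* \otimes \mathbbmss{R}$ a Schouten manifold, and since $\bar S$ has weight minus one this is the Schouten structure of a Lie algebroid over the extended base, completing the proof. \proofend
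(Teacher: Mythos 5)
Your proposal follows essentially the same route as the paper: decompose the canonical Poisson bracket on $T^{*}(\Pi E^{*}\otimes \mathbbmss{R})$ into the old bracket plus the $(t,p)$ part, compute $\{\bar S,\bar S\}$ to obtain $e^{-2t}\bigl(\{S,S\}+2\mathcal{Q}S-2p\{S,\mathcal{Q}\}\bigr)$ up to a term proportional to $\{\mathcal{Q},\mathcal{Q}\}$, and let the three defining conditions of the odd Jacobi structure kill everything. One small correction: since $t$ is the coordinate on the even line $\mathbbmss{R}$, its conjugate momentum $p$ is \emph{even}, not odd as you state; this is in fact what makes $\mathcal{Q}p$ odd and hence $\bar S$ a homogeneous odd function, so your parity bookkeeping should be adjusted accordingly even though the conclusion stands.
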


\begin{proof}
Let us denote the canonical Poisson bracket on $T^{*}(\Pi E^{*} \otimes \mathbbmss{R})$ by $\{\bullet  ,\bullet\}$. Note then that we have the natural decomposition $\{ , \} = \{ , \}_{T^{*}(\Pi E^{*})} + \{ , \}_{T^{*}\mathbbmss{R}}$. It is then a straight forward exercise to take into account terms that contain conjugate variables and those that don't to show that
\begin{equation}
\nonumber \{ \bar{S}, \bar{S} \} = e^{-2t} \left(\{ S,S \}_{T^{*}(\Pi E^{*})} + 2 \mathcal{Q}S - 2p \{ S, \mathcal{Q}\}_{T^{*}(\Pi E^{*})}    \right). \end{equation}
Thus as the pair $(S, \mathcal{Q})$ define an odd Jacobi structure on $\Pi E^{*}$,  $\bar{S} \in T^{*}(\Pi E^{*} \otimes \mathbbmss{R})$ is a Schouten structure on $\Pi E^{*} \otimes \mathbbmss{R}$. The assignment of the weight follows directly from the definition.
\proofend
\end{proof}

\begin{remark}
Of course as the above proof does not rely on the graded structure of $\Pi E^{*}$ the ``Schoutenisation" process extends directly to arbitrary odd Jacobi manifolds.
\end{remark}

In natural local coordinates this Schouten structure is given by

\begin{equation}
\bar{S} = e^{-t} \left((-1)^{\widetilde{\alpha}} \pi^{\alpha}Q_{\alpha}^{A}p_{A} + (-1)^{\widetilde{\alpha}+ \widetilde{\beta}}\frac{1}{2}\pi^{\alpha}\pi^{\beta} Q_{\beta \alpha}^{\gamma}\eta_{\gamma} - \pi^{\alpha}Q_{\alpha}p \right).
\end{equation}

We need to understand the vector bundle structure in order to really identify the Lie algebroid structure. Given the weight assigned to the coordinates on $\Pi E^{*} \otimes \mathbbmss{R}$ the associate underlying (dual) vector bundle structure is $\proj^{*}E \longrightarrow M \otimes \mathbbmss{R}$. That is the pullback of $E \rightarrow M$ by   $\proj : M \otimes \mathbbmss{R} \rightarrow M$.

\begin{corollary}
If $\Pi E^{*}$ has the structure of a Jacobi algebroid then $\proj^{*}E \longrightarrow M \otimes \mathbbmss{R}$ is a Lie algebroid.
\end{corollary}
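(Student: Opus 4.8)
The plan is to combine the previous proposition with the dictionary, established earlier in the excerpt, between weight minus one Schouten structures on the total space of $\Pi F^{*}$ and Lie algebroid structures on a vector bundle $F \rightarrow N$ (item \ref{schouten} of the list in \S\ref{sec:Introduction}). The preceding proposition already gives us a weight minus one Schouten structure $\bar{S} = e^{-t}\left(S - \mathcal{Q}p\right)$ on the total space of $\Pi E^{*} \otimes \mathbbmss{R}$. So essentially all that remains is to identify correctly which vector bundle this Schouten structure lives on, and then invoke the standard correspondence to conclude that that vector bundle carries a Lie algebroid structure.

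First I would pin down the underlying vector bundle. The coordinates carrying nonzero weight on $\Pi E^{*} \otimes \mathbbmss{R}$ are the $\eta_{\alpha}$ (weight $+1$), so the base is spanned by the weight-zero coordinates $(x^{A}, t)$ and the fibre by the $\eta_{\alpha}$. Since $t$ is an extra even coordinate adjoined to $M$, the base is $M \otimes \mathbbmss{R}$, and the fibre coordinates $\eta_{\alpha}$ transform exactly as they do over $M$ (the transition functions are $t$-independent). Hence the underlying graded manifold is $\Pi(\proj^{*}E)^{*}$, where $\proj : M \otimes \mathbbmss{R} \rightarrow M$ is the projection and $\proj^{*}E$ is the pullback bundle; equivalently, $\Pi E^{*}\otimes \mathbbmss{R} \cong \Pi(\proj^{*}E)^{*}$ as graded bundles over $M \otimes \mathbbmss{R}$. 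This is exactly the identification flagged in the paragraph immediately before the corollary.

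Next I would observe that $\bar{S}$ is, by the previous proposition, an honest (weight minus one) Schouten structure on the total space of $T^{*}\!\left(\Pi E^{*}\otimes \mathbbmss{R}\right) = T^{*}\!\left(\Pi(\proj^{*}E)^{*}\right)$: it is odd, quadratic-plus-linear... in fact it is quadratic in the fibre momenta $(p_{A}, \pi^{\alpha}, p)$ in the appropriate graded sense, and it squares to zero under the canonical Poisson bracket. Therefore, by the correspondence \ref{schouten} (Vaĭntrob / Va\u{\i}ntrob, as recalled in the introduction), $\bar{S}$ defines a Lie algebroid structure on $\proj^{*}E \rightarrow M \otimes \mathbbmss{R}$. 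The anchor and bracket can be read off from the Hamiltonian $\bar{S}$ in the usual way — the exponential prefactor $e^{-t}$ just rescales the structure functions by an (invertible, $t$-dependent) factor, which is harmless.

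The only genuine point requiring care — and the step I expect to be the main obstacle — is the weight bookkeeping together with the claim that $e^{-t}$, a non-polynomial function, is admissible. A priori the graded-manifold conventions in the excerpt demand coordinate changes polynomial in the nonzero-weight coordinates; here $t$ has weight zero, so $e^{-t}$ is a perfectly legitimate weight-zero function and multiplying by it does not disturb the $\mathbbmss{Z}$-grading assignments — one should simply note this explicitly. One must also check that the Schouten bracket on $\Pi(\proj^{*}E)^{*}$ really does have weight $-1$ with respect to the natural grading coming from $\proj^{*}E \rightarrow M\otimes\mathbbmss{R}$ (i.e. that the degrees of $\eta_{\alpha}$ and $\pi^{\alpha}$ are inherited unchanged from the $M$ picture), which is immediate since the extra coordinates $t, p$ both carry weight zero and $\bar{S}$ has the same weight as $S$. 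With these remarks in place the corollary follows from the proposition and the standard super-geometric description of Lie algebroids.
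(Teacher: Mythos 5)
Your proposal is correct and follows the same route the paper takes: the paper offers no separate proof of the corollary, treating it as an immediate consequence of the preceding proposition (which supplies the weight minus one Schouten structure $\bar{S}=e^{-t}(S-\mathcal{Q}p)$) together with the identification, made in the paragraph just before the corollary, of the underlying vector bundle as $\proj^{*}E \rightarrow M \otimes \mathbbmss{R}$, and then the standard correspondence between weight minus one Schouten structures on $\Pi F^{*}$ and Lie algebroid structures on $F$. Your additional remarks on the admissibility of the weight-zero factor $e^{-t}$ and on the weight bookkeeping are sound and only make explicit what the paper leaves implicit.
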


\noindent \textbf{Statement:} given a Jacobi algebroid  structure on $\Pi E^{*}$, one can extend the base space $M$ of the underlying vector bundle $E \longrightarrow M$ by $\mathbbmss{R}$ to directly construct a Lie algebroid.

\section{Odd contact manifolds and Jacobi algebroids}\label{sec:odd contact}

In this section we  show that the manifold $M :=  \Pi T^{*}N \otimes \mathbbmss{R}^{0|1}$ considered as an odd contact manifold provides a canonical example of a Jacobi algebroid than lends itself to the description in terms of odd Jacobi brackets. One should consider odd Jacobi manifolds as a specific example in the context of Corollary 3.\ref{cor:2}.\\

Let $N$ be a pure even classical manifold of dimension $n$. Consider the manifold $M :=  \Pi T^{*}N \otimes \mathbbmss{R}^{0|1}$ equipped with natural local coordinates $(x^{a}, x^{*}_{a}, \tau)$. The coordinates $x^{a}$ are even, while the other coordinates $x^{*}_{a}$ and $\tau$ are odd. The dimension of $M$ is $(n|n+1)$. The manifold $M$ comes equipped with an \emph{odd contact one form}, which is the even one form
\begin{equation}
\alpha = d \tau - x^{*}_{a}dx^{a}.
\end{equation}

It was shown in \cite{Bruce2011} that $M$ is an odd Jacobi manifold with the odd Jacobi structure being

\begin{equation}
S = p^{a}_{*}  \left( p_{a} + x^{*}_{a} \pi\right),  \hspace{30pt} \mathcal{Q} =  - \pi,
\end{equation}
where we have employed natural coordinates $(x^{a}, x^{*}_{a}, \tau, p_{a}, p_{*}^{a} ,\pi)$ on $T^{*}M$. Indeed this odd Jacobi structure is directly equivalent to the odd contact structure. Without details, both the odd contact and odd Jacobi structure on $M$ can be considered as the ``natural superisation" of the classical structures on $ \mathbbmss{R}^{3}$. Note that $\Pi T^{*}N$ comes equipped with a canonical Schouten (odd symplectic) structure, but $\Pi T^{*}N \otimes \mathbbmss{R}^{0|1}$ comes with a canonical odd Jacobi structure. \\

Let us  attach the weight to the local coordinates on $M$ as:\\

\begin{tabular}{|l ||l| }
\hline
$ \w(x^{a}) = 0$  & $\w(p_{a}) = 0$ \\
$ \w(x^{*}_{a})=1$  & $\w(p_{*}^{a}) =-1$\\
$\w(\tau)=1$ &   $\w(\pi)=-1$\\
\hline
\end{tabular}\\

This weight is the ``natural weight" with respect to the underling  vector bundle structure $T^{*}N \otimes \mathbbmss{R} \longrightarrow N$. With respect to this weight it is clear that the odd  Jacobi structure on $M := \Pi T^{*}N \otimes \mathbbmss{R}^{0|1}$ is of weight minus one and we thus have a Jacobi algebroid. \\

Now consider $M^{\star} := \Pi TN \otimes \mathbbmss{R}^{0|1}$ equipped with natural local coordinates $(x^{a}, \xi^{a}, \eta, p_{a}, \pi_{a}, \theta)$. The canonical double vector bundle morphism $R: T^{*}M \rightarrow T^{*}M^{\star}$ act on the coordinates as $R^{*}(\xi^{a}) = p_{*}^{a}$, $R^{*}(\eta) = \pi$, $R^{*}(\pi_{a})= x^{*}_{a}$ and $R^{*}(\theta)= \tau$. Then we can pull-back the odd Jacobi structure to give

\begin{eqnarray}
\nonumber \hat{S} &=& \xi^{a}(p_{a} + \pi_{a} \eta),\\
\nonumber \hat{\mathcal{Q}} &=& - \eta,
\end{eqnarray}

both of which are now functions on the total space of $M^{\star}$.  Then we can ``undo" the symbol (and after a little reordering) to produce

\begin{eqnarray}
\D &=& \xi^{a}\frac{\partial}{\partial x^{a}} + \eta \xi^{a}\frac{\partial }{\partial \xi^{a}},\\
\nonumber q &=& \eta.
\end{eqnarray}

Direct calculation confirms that $M^{\star} := \Pi TN \otimes \mathbbmss{R}^{0|1}$ is a quasi Q-manifold. We see that in light of Proposition  3.\autoref{prop2}. that the can consider $TN \otimes \mathbbmss{R} \rightarrow N$ as a Lie algebroid in the presence of a 1-cocycle. The de Rham differential on $N$ is the associated homological vector field and the 1-cocycle is identified with the ``odd time".\\

\begin{remark}
As this work was being completed, Mehta \cite{Mehta2011} established a one-to-one correspondence between Jacobi manifolds and degree 1 contact  $NQ$-manifolds. Mehta shows how to interpret the ``Poissonisation" of a Jacobi manifold as the ``symplectification" of the corresponding degree 1 contact $NQ$-manifold. There is no doubt that Mehta's results can be slightly reformulated to sit comfortably with the conventions used here: one would consider ``Shoutenisation" and ``symplectification" of odd contact structures.  This generalises  the  correspondence between Poisson manifolds and degree 1 symplectic $NQ$-manifolds, as established by Roytenberg \cite{Roytenberg:2001}.
\end{remark}

\section{Concluding remarks}\label{sec:concluding}
In this paper we \emph{define} Jacobi algebroids  in terms of an odd Jacobi structure on $\Pi E^{*}$ of weight minus one. That is the ``multivector fields" come equipped with an odd Jacobi bracket. For Lie algebroids the bracket between ``multivector fields" is a Schouten bracket, i.e. satisfies a strict Leibniz rule.  \\

This construction was then used to construct a weight one almost homological vector field.  That is the ``differential forms" come equipped with a kind of \emph{deformed} de Rham differential. Importantly we no longer  have a homological vector field as in the case of Lie algebroids, but rather the homological condition is weakened in a very specific way as to provide a quasi Q-manifold structure. As such Jacobi algebroids can be considered as very specific  examples of  \emph{skew algebroids}  \cite{Grabowski1999}, which are a kind of Lie algebroid in which the Jacobi identity is lost. If the corresponding anchor is  a Lie algebra morphism between sections of the vector bundle and vector fields over the base then we have the notion of an \emph{almost Lie algebroid} \cite{Leon2010}. Interest in algebroids without the Jacobi identity comes from nonholonomic mechanics, where skew algebroids provide a general geometric setting.\\

However, via a simple redefinition one can rephrase Jacobi algebroids in terms of Lie algebroids in the presence of a 1-cocycle, which are also known as generalised Lie algebroids. In doing so we recover, maybe up to conventions, the notion of a Jacobi algebroid in the sense of \cite{Grabowski2001,Iglesias2001}. \\

\noindent \emph{Questions that naturally follow from this work include:}
\begin{itemize}
\item Do Jacobi bialgebroids have an efficient description in terms of a compatible odd Jacobi structure and a quasi Q-structure?
\item Can one define non-linear Jacobi algebroids in terms of odd Jacobi structures over non-negatively graded supermanifolds? Are these naturally related to Voronov's  (\cite{Voronov2010}) non-linear Lie algebroids?
\item Is there a good notion of a Jacobi-$\infty$  algebroid, \emph{id est} via weakening the weight condition on the Jacobi structures but keeping the vector bundle structure? Or even just a Jacobi version of an $L_{\infty}$-algebra? How does the notion of $L_{\infty}$-algebroids (\cite{Bruce2011b}) fit in here? What about Lie algebroids in the presence of an $n$-cocycle? Clearly most of the constructions carry over to to the $n$-cocycle case but the weights of the objects will be inhomogeneous: what does this mean?
\end{itemize}

\section*{Acknowledgements} The author would like to thank  J. Grabowski and  R.A. Mehta for there  invaluable comments on  earlier versions of this work.

\appendix
\section*{Appendix}\label{appendix}
\section{Canonical double vector bundle morphisms}
\noindent For completeness we present the canonical double vector bundle morphisms used in this work. In particular we prove that the morphisms are symplectomorphisms. We describe vector bundles in terms of coordinates on their total spaces and the associated vector bundle automorphisms. Specifically we have:\\

\begin{tabular}{|l||l|}
\hline
$E \longrightarrow M$  &  $E^{*}\longrightarrow M$\\
$(x^{A}, e^{\alpha})  \mapsto (x^{A}) $ & $(x^{A}, e_{\alpha})  \mapsto (x^{A}) $\\
\hline
&\\
$\overline{x}^{A} = \overline{x}^{A}(x)$& $\overline{x}^{A} = \overline{x}^{A}(x)$\\
$\overline{e}^{\alpha} = e^{\beta}T_{\beta}^{\:\: \alpha}(x)$ & $\overline{e}_{\alpha} = \left( T^{-1} \right)_{\alpha}^{\:\: \beta}(x)$\\
\hline
\end{tabular}\\

Where $T_{\beta}^{\:\: \gamma} \left( T^{-1} \right)_{\gamma}^{\:\: \alpha} = \delta_{\beta}^{\:\: \alpha}$ \emph{etc}. We take  $\widetilde{e^{\alpha}} = \widetilde{e_{\alpha}}= \widetilde{\alpha}$.

\subsection{$T^{*}(\Pi E^{*})$ and $T^{*}(\Pi E)$}\label{A1}
\noindent Let us employ natural local coordinates:

\vspace{15pt}
\begin{tabular}{|l ||l| }
\hline
$ T^{*}(\Pi E^{*})$ & $(x^{A},\eta_{\alpha}, p_{A}, \pi^{\alpha} )$ \\
$  T^{*}(\Pi E)$  & $(x^{A},\xi^{\alpha}, p_{A}, \pi_{\alpha} )$\\
\hline
\end{tabular}\\

\noindent The Grassmann  parities are given by $ \widetilde{x}^{A}= \widetilde{p}_{A}= \widetilde{A}$, $\widetilde{\eta}_{\alpha}= \widetilde{\pi}^{\alpha}= \widetilde{\pi}_{\alpha}= \widetilde{\xi}^{\alpha} =  (\widetilde{\alpha}+1)$. The weights are assigned as $\w(x^{A}) =0$, $\w(\eta_{\alpha}) =1$, $\w(p_{A}) =0$, $\w(\pi^{\alpha}) =-1$ , $\w(\xi^{\alpha}) =-1$, $\w(\pi_{\alpha}) =1$.  The admissible changes of coordinates are:

\vspace{15pt}
\begin{tabular}{|l||l|}
\hline
& \\
$T^{*}(\Pi E^{*})$ &   $\overline{x}^{A}  =  \overline{x}^{A}(x)$, \hspace{5pt} $\overline{\eta}_{\alpha}  =   (T^{-1})_{\alpha}^{\:\: \beta}\eta_{\beta}$,\\
 & $\overline{p}_{A} = \left( \frac{\partial x^{B}}{\partial \overline{x}^{A}} \right)p_{B} + (-1)^{\widetilde{A}(\widetilde{\gamma}+ 1) + \widetilde{\delta}} \pi^{\delta}T_{\delta}^{\:\: \gamma} \left( \frac{\partial (T^{-1})_{\gamma}^{\:\: \alpha}}{\partial \overline{x}^{A}} \right)\eta_{\alpha}$,\\
 & $ \overline{\pi}^{\alpha} = (-1)^{\widetilde{\alpha} + \widetilde{\beta}}\pi^{\beta}T_{\beta}^{\:\: \alpha}$.\\
\hline
&\\
$T^{*}(\Pi E)$ & $ \overline{x}^{A}  =  \overline{x}^{A}(x)$, \hspace{5pt}$\overline{\xi}^{\alpha}  =   \xi^{\beta} T_{\beta}^{\:\: \alpha}$,\\
& $\overline{p}_{A} = \left( \frac{\partial x^{B}}{\partial \overline{x}^{A}} \right)p_{B} + (-1)^{\widetilde{A}(\widetilde{\gamma}+1)} \xi^{\delta}T_{\delta}^{\:\: \gamma} \left(\frac{\partial (T^{-1})_{\gamma}^{\:\: \alpha}}{\partial \overline{x}^{A}}  \right)\pi_{\alpha}$,\\
&  $\overline{\pi}_{\alpha} = (T^{-1})_{\alpha}^{\:\: \beta} \pi_{\beta}$.\\
\hline
\end{tabular}\\

\vspace{15pt}

\noindent There is canonical double vector bundle morphism $R: T^{*}(\Pi E^{*}) \rightarrow T^{*}(\Pi E )$ given in local coordinates as

\begin{equation}\nonumber
R^{*}(\pi_{\alpha}) = \eta_{\alpha},  \hspace{35pt} R^{*}(\xi^{\alpha}) = (-1)^{\widetilde{\alpha}}\pi^{\alpha}.
\end{equation}

\begin{lemma}\label{lemma1}
The canonical double vector bundle morphism $R: T^{*}(\Pi E^{*}) \rightarrow T^{*}(\Pi E )$ is a symplectomorphism.
\end{lemma}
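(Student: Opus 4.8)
The plan is to verify directly that $R$ preserves the canonical symplectic forms by checking the action on the Liouville (tautological) one-forms, or equivalently by checking that the canonical Poisson brackets match on a generating set of coordinate functions. I would take the second route since the bracket is already written down explicitly in the Preliminaries. Concretely, a double vector bundle morphism covering the identity on $M$ is determined by its effect on the fibre coordinates, so it suffices to show that the six Poisson brackets among $\{x^A, R^*(\xi^\alpha), R^*(p_A), R^*(\pi_\alpha)\}$, computed with the canonical bracket on $T^*(\Pi E^*)$, reproduce the canonical brackets among $\{x^A, \xi^\alpha, p_A, \pi_\alpha\}$ on $T^*(\Pi E)$. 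Since $R$ is required to be a \emph{double} vector bundle morphism, $R^*(x^A) = x^A$ and the remaining bracket relations to check reduce to the pairings $\{R^*(\pi_\alpha), x^B\}$, $\{R^*(\xi^\alpha), x^B\}$, $\{R^*(p_A), x^B\}$, $\{R^*(\xi^\alpha), R^*(\pi_\beta)\}$, and that the ``like-like'' brackets $\{x^A,x^B\}$, $\{R^*(\xi^\alpha), R^*(\xi^\beta)\}$, $\{R^*(\pi_\alpha), R^*(\pi_\beta)\}$, $\{R^*(p_A), R^*(p_B)\}$ all vanish.

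First I would write out the full coordinate form of $R^*$: we already have $R^*(\pi_\alpha) = \eta_\alpha$ and $R^*(\xi^\alpha) = (-1)^{\widetilde{\alpha}}\pi^\alpha$, and it remains to determine $R^*(p_A)$ from the requirement that $R$ intertwine the two sets of admissible coordinate transformations listed in the Appendix; comparing the transformation rule for $\overline{p}_A$ on $T^*(\Pi E)$ with the one on $T^*(\Pi E^*)$ forces a specific expression for $R^*(p_A)$, which should come out to $p_A$ itself (the anomalous $\eta\pi$-terms in the two transformation laws are designed to cancel precisely under $\pi_\alpha \mapsto \eta_\alpha$, $\xi^\alpha \mapsto (-1)^{\widetilde\alpha}\pi^\alpha$). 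Then I would substitute into the canonical bracket formula and check the pairings. The nontrivial one is $\{R^*(\xi^\alpha), R^*(\pi_\beta)\} = \{(-1)^{\widetilde\alpha}\pi^\alpha, \eta_\beta\}$, which must equal $\{\xi^\alpha, \pi_\beta\}$ computed on $T^*(\Pi E)$; here $\eta_\alpha$ and $\pi^\alpha$ are conjugate on $T^*(\Pi E^*)$, while $\xi^\alpha$ and $\pi_\alpha$ are conjugate on $T^*(\Pi E)$, so the two should agree up to the $(-1)^{\widetilde\alpha}$ and the parity-dependent sign conventions in the bracket — this is exactly where the sign $(-1)^{\widetilde\alpha}$ in the definition of $R^*(\xi^\alpha)$ earns its keep.

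The main obstacle is bookkeeping of Koszul signs: the canonical Poisson bracket carries parity-dependent factors $(-1)^{\widetilde A \widetilde F + \widetilde A}$ etc., the fibre coordinates on $\Pi E^*$ and $\Pi E$ have parity $\widetilde\alpha + 1$ rather than $\widetilde\alpha$, and one must track these consistently through the substitution. I expect no conceptual difficulty — the result is essentially forced by the fact that $R$ is built to intertwine the two coordinate atlases — but the sign that the $(-1)^{\widetilde\alpha}$ prefactor must absorb is precisely the delicate point, and I would present that one bracket computation in full and dispatch the remaining ``like-like'' vanishings with the remark that they follow since $R^*$ sends each such pair to functions of mutually non-conjugate coordinates. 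Once all six brackets match, $R$ is a Poisson map between symplectic manifolds of equal dimension and is invertible by construction, hence a symplectomorphism.
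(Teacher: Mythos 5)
Your proposal is correct and is essentially the paper's argument in dual form: the paper writes the two canonical symplectic forms in Darboux coordinates, $\omega_{T^{*}(\Pi E^{*})} = dp_{A}dx^{A} + d\pi^{\alpha}d\eta_{\alpha}$ and $\omega_{T^{*}(\Pi E)} = dp_{A}dx^{A} + d\pi_{\alpha}d\xi^{\alpha}$, and checks $R^{*}\omega_{T^{*}(\Pi E)} = \omega_{T^{*}(\Pi E^{*})}$ directly, which is the same coordinate verification you carry out at the level of canonical Poisson brackets. Your additional observations are sound --- $R^{*}(p_{A}) = p_{A}$ does follow because the anomalous $\eta\pi$-terms in the two transformation laws cancel under the substitution, and the prefactor $(-1)^{\widetilde{\alpha}}$ is exactly what reconciles $\{\pi^{\alpha},\eta_{\beta}\} = \delta^{\alpha}_{\beta}$ with $\{\xi^{\alpha},\pi_{\beta}\} = (-1)^{\widetilde{\alpha}}\delta^{\alpha}_{\beta}$ --- so your write-up is in fact more detailed than the one in the paper.
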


\begin{proof}
The canonical even  symplectic structure on $T^{*}(\Pi E^{*})$ is given by $\omega_{T^{*}(\Pi E^{*})} =  dp_{A}dx^{A} + d\pi^{\alpha}d \eta_{\alpha}$ and on $T^{*}(\Pi E)$ is given by $\omega_{T^{*}(\Pi E)} = dp_{A}dx^{A} + d\pi_{\alpha}d\xi^{\alpha}$. Thus, $R^{*}\omega_{T^{*}(\Pi E )} = \omega_{T^{*}(\Pi E^{*})}$ and we see that $R$ is indeed a symplectomorphism.
\proofend
\end{proof}

\vfill
\begin{center}
Andrew James Bruce\\
\small{\emph{email:} \texttt{andrewjamesbruce@googlemail.com}  }
\end{center}
\end{document}